\def\be{\begin{equation}}     
\def\ee{\end{equation}}
\def\ket#1 {| #1 \rangle}
\numberwithin{equation}{section}
\newtheorem{theorem}{Theorem}
\newtheorem{corollary}[theorem]{Corollary}
\newtheorem{proposition}[theorem]{Proposition}
\theoremstyle{definition}
\begin{document}
\title{Exponential vanishing of the ground-state gap of the QREM  via adiabatic quantum computing}
\author{J. Adame}
 \affiliation{Department of Mathematics, Caltech, Pasadena, USA.}
\author{S. Warzel}%
 \email{warzel@ma.tum.de.}
\affiliation{ 
Zentrum Mathematik, TU M\"unchen, Boltzmannstr. 3, 85747 Garching, Germany.
}%
\date{\today}

\begin{abstract} 
In this note we use  ideas of Farhi, Goldstone, Gosset, Gutmann, Nagaj and Shor who link a lower bound on the run time of their quantum adiabatic search algorithm to an upper bound on the energy gap above the ground-state of the generators of this algorithm. We  apply these ideas to the quantum random energy model (QREM). Our main result is a simple proof of the conjectured exponential vanishing of the energy gap  of the QREM.
\end{abstract}

\pacs{03.67.Ac, 75.10.Nr, 64.70.Tg}
\maketitle

\section{\label{sec:level1} Quantum Search Algorithms}

Finding the minimum value in an unstructured energy landscape $ u: \{ 1, \dots, M\} \to \mathbb{R} $ is a task which  by any classical algorithm generally amounts to order $ M  $ trials  to succeed. Ever since Grover proposed his algorithm, it is   known that 
this search can be sped up by a factor of $ \sqrt{M} $ through quantum computations.\cite{Gr1,Gr2} Shortly after,
Farhi and collaborators\cite{FGGS00,F+01} proposed another quantum search algorithm which has the advantage of being based on the continuous time-evolution  without using quantum gates.
Their idea was to encode the energy landscape $ u $ 
in a diagonal matrix $ U = {\rm diag}\left(u(1),\dots,u(M) \right) $, 
which is sometimes referred to as the `Problem-Hamiltonian',  and acts on $ \mathbb{C}^M $.
The task of finding a minimum 
is now equivalent to the search for a ground-state  of $ U $. 
To accomplish this, the authors suggested to use the quantum evolution with an adiabatic time-scale $ T > 0 $ 
\begin{equation}\label{eq:qdyn}
i \frac{d}{dt} \psi(t) = h(t/T) \psi(t) \, , \qquad \psi(0)  \in \mathbb{C}^M \, .
\end{equation}
The time-dependent generators are taken to be of the form
$$
h(s) = h_D(s) + c(s) \, U \, ,
$$
satisfying the following assumptions:
\begin{itemize}
\item[\bf a1] $ c:  [0,1] \to [0,1] $ is twice-continuously differentiable with $ c(0) = 0 $ and $ c(1) = 1 $,
\item[\bf a2] $ h_D: [0,1]\to {\rm Herm}(\mathbb{C}^{M\times M}) $  is twice-continuously differentiable with $ h_D(1) = 0 $,
\item[\bf a3]  $ h(s) $ has a non-degenerate ground-state $ \phi(s) \in \mathbb{C}^M $ for any $ s \in [0,1] $.
\end{itemize}
Since $ h(1) = U $, this in particular requires $ u $ to have a unique minimum which we will denote by $ u(j_0) = \min_k u(k) $. 
The quantum search for this minimum then amounts to starting the time-evolution~\eqref{eq:qdyn} in the known ground-state $ \phi(0) = \psi(0) $  of the 'Driving-Hamiltonian' $ h_D(0) = h(0) $, and reading out the components  of the state  $ \psi(T) $ at the final time  in the canonical basis $ e_1, \dots e_M \in \mathbb{C}^M $. If the adiabatic time $ T > 0 $ is large enough, the hope is to arrive  in the unique ground-state $ \phi(1) = e_{j_0} $ of $ h(1) = U $. 
More quantitatively, the probability $ \left| \langle \phi(1) ,  \psi(T) \rangle \right|^2 =  \left| \langle e_{j_0} ,  \psi(T) \rangle \right|^2 $ that  the time-evolution ends up in the state $ \phi(1) = e_{j_0} $ is estimated with the help of the adiabatic theorem of Kato.\cite{Kato} The following is an explicit version taken from Ref.~\onlinecite{JRS07}.
\begin{theorem}[cf.~Ref.~\onlinecite{JRS07}]\label{thm:adabatic}
Let $ h: [0,1]\to {\rm Herm}(\mathbb{C}^{M\times M}) $ be a family of twice continuous differentiable hermitian matrices with
\begin{enumerate}
\item a non-degenerate ground-state  $ \phi(s) \in  \mathbb{C}^M $, and 
\item an energy-gap 
$ \gamma(s) > 0 $ above the ground-state. 
\end{enumerate} 
Then the unique solution of the initial-value problem~\eqref{eq:qdyn}
satisfies:
\begin{align}
 \sqrt{1 - \left| \langle \psi(T) , \phi(1) \rangle \right|^2 }
  \leq  \frac{1}{T} \left[ \frac{1}{\gamma(0)^2} \| h'(0) \|  + \frac{1}{\gamma(1)^2} \| h'(1) \|   + \int_0^1 \frac{7}{\gamma(s)^3} \| h'(s) \|^2 + \frac{1}{\gamma(s)^2} \| h''(s) \|  ds \right]\, , 
\end{align}
where $ \| \cdot \| $ denotes the operator norm. 
\end{theorem}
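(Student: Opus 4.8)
\emph{Proof strategy.} The plan is to reproduce the quantitative proof of Kato's adiabatic theorem from Ref.~\onlinecite{JRS07}, carrying all constants. First I would pass to the rescaled time $s=t/T$ and write $U_T(s)$ for the unitary propagator generated by $T\,h(s)$, so that $\psi(T)=U_T(1)\phi(0)$. Let $P(s)$ be the rank-one spectral projection onto $\phi(s)$ and $Q(s)=\mathbf 1-P(s)$. From the Riesz representation $P(s)=\frac{1}{2\pi i}\oint(z-h(s))^{-1}\,dz$, with the contour a small loop around the lowest eigenvalue, one sees that $P$ is twice continuously differentiable; moreover, solving the commutator equation obtained by differentiating $[h(s),P(s)]=0$ in the off-diagonal block (which is invertible because of the gap) yields the sharp pointwise bound $\|P'(s)\|\le\|h'(s)\|/\gamma(s)$ and, after one more differentiation, an estimate of $\|P''(s)\|$ in terms of $\|h''(s)\|$, $\|h'(s)\|^2$ and powers of $\gamma(s)$, where one also uses $|\gamma'(s)|\le 2\|h'(s)\|$.

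Second, I would introduce Kato's parallel-transport propagator $U_A(s)$, the unitary generated by the Hermitian operator $T\,h(s)+i[P'(s),P(s)]$ (Hermitian since $i[P',P]$ is self-adjoint). By construction it intertwines the instantaneous projections, $U_A(s)P(0)=P(s)U_A(s)$, so $U_A(1)\phi(0)$ is a unit vector in the range of $P(1)$ and hence equals $e^{i\theta}\phi(1)$ for some phase. A one-line computation with unit vectors then gives $\sqrt{1-|\langle\psi(T),\phi(1)\rangle|^2}\le\|(U_T(1)-U_A(1))\phi(0)\|\le\|\mathbf 1-U_T(1)^*U_A(1)\|$, so it remains to bound the last operator norm.

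Third --- the crux --- one has the identity $\frac{d}{ds}\big(U_T(s)^*U_A(s)\big)=U_T(s)^*[P'(s),P(s)]\,U_A(s)$, hence $\mathbf 1-U_T(1)^*U_A(1)=-\int_0^1 U_T^*[P',P]U_A\,ds$; this is a priori only $O(1)$ and the factor $1/T$ must be extracted by a single integration by parts against the fast phase. To do so I would solve $[h(s),X(s)]=[P'(s),P(s)]$ with the explicit off-diagonal solution, bounded by $\|X(s)\|\le\|h'(s)\|/\gamma(s)^2$, and then substitute $[P',P]=hX-Xh$ and use the equations of motion for $U_T$ and $U_A$ to trade each factor $h$ for a derivative of a propagator; this rewrites the integrand as $-\frac{i}{T}\frac{d}{ds}(U_T^*XU_A)+\frac{i}{T}U_T^*X'U_A+\frac{i}{T}U_T^*X[P',P]U_A$. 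Integrating, the total-derivative term yields the boundary contribution bounded by $\frac{1}{T}(\|X(1)\|+\|X(0)\|)\le\frac{1}{T}(\|h'(1)\|/\gamma(1)^2+\|h'(0)\|/\gamma(0)^2)$, while the remaining two terms give $\frac{1}{T}\int_0^1\|X'(s)\|\,ds$ and $\frac{1}{T}\int_0^1\|X(s)\|\,\|[P',P](s)\|\,ds$. Using $\|[P',P]\|\le 2\|P'\|\le 2\|h'\|/\gamma$ and, for $X'$, differentiating the off-diagonal resolvent formula for $X$ (which brings in $\|h''\|$, $\|h'\|^2$ and $|\gamma'|\le 2\|h'\|$), one collects powers of $\gamma$ and arrives at the integrand $7\gamma(s)^{-3}\|h'(s)\|^2+\gamma(s)^{-2}\|h''(s)\|$; assembling the three pieces gives exactly the stated bound.

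\emph{Main obstacle.} Conceptually there is no hard step; all the work sits in the constant-chasing of the third step: obtaining the off-diagonal inverse of $[h(s),\cdot]$ with the optimal factor $\gamma(s)^{-1}$ rather than $\gamma^{-1}$ times a crude contour length, differentiating the resolvent representations of $P$ and of $X$ while keeping the eigenvalue motion under control through $|\gamma'|\le 2\|h'\|$, and bundling the handful of resulting terms so that the numerical constant in front of $\gamma^{-3}\|h'\|^2$ comes out as $7$ and the two others as $1$. The remaining ingredients --- unitarity of $U_T$ and $U_A$, Duhamel's formula, and the projection identities $P'P=QP'P$, $PP'=PP'Q$, $PP'P=0$ --- are routine.
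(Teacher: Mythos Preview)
The paper does not give its own proof of this theorem: it is quoted verbatim as a known result from Ref.~\onlinecite{JRS07} (Jansen--Ruskai--Seiler) and used as a black box in the argument leading to Corollary~\ref{cor:gap}. Your proposal is a faithful outline of precisely the Jansen--Ruskai--Seiler argument --- Kato's intertwining evolution $U_A$, the Duhamel identity $(U_T^*U_A)'=U_T^*[P',P]U_A$, and the single integration by parts via the solution $X$ of $[h,X]=[P',P]$ --- so there is nothing to compare: you are reproducing the cited proof, which is exactly what is called for here, and the sketch is correct in its essentials (the constant-chasing you flag as the main obstacle is indeed where the numbers $7$ and $1$ come from in Ref.~\onlinecite{JRS07}).
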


The adiabatic theorem hence ties the value of the energy gap $ \gamma(s) $ to the run time of the quantum adiabatic search. Usually, bounds on the energy gap are used to estimate the run time. 
 In this paper, we follow an idea of Fahri, Goldstone, Gosset, Gutmann, and Shor\cite{FGGGS10} to deduce a bound on the smallest gap, $ \min_{s \in [0,1]} \gamma(s) $, from a lower bound on the run-time of the quantum adiabatic algorithm. Our main novel point presented in Section~\ref{sec:QREM} below is the application of this idea to the quantum random energy model (QREM). Before, presenting the QREM let us summarize the results of Refs.~\onlinecite{FGGN08,FGGGS10} needed below.

\subsection{\label{sec:level2} The scrambled ensemble and a lower bound on the run time}

Initially, the aim was to outperform the Grover algorithm in the set-up of quantum adiabatic computing described above. In particular, in case of search problems which belong to the NP-complete class the hope was to have identified a quantum search algorithm which has polynomial run time. That this is not the case was realised shortly after. From a computational complexity point of view the above quantum search algorithm  is equivalent to all other models for universal quantum computation.\cite{AD+04} 

Farhi, Goldstone, Gutmann, and Nagaj\cite{FGGN08} later quantified this fact through the following lower bound on the run time of the algorithm. 
More specifically, their result concerns scrambled versions of the original search problem,
\begin{equation}
U_\pi = {\rm diag}\left(u(\pi^{-1}(1)),\dots,u(\pi^{-1}(M)) \right) \, ,  
\end{equation}
where $ \pi \in \mathcal{S}_M $ denotes a permutation of the $ M $ elements. Let $ h_\pi(s) :=  h_D(s) + c(s) \, U_\pi $ denote the generator of the scrambled-time adiabatic evolution 
\begin{equation}\label{eq:qdyn2}
 i \frac{d}{dt} \psi_\pi(t) = h_\pi(t/T) \psi_\pi(t)  
 \end{equation}
 each starting from the same initial state $ \psi_\pi(0) = \phi(0) \in \mathbb{C}^M $. The set of permutations for which the quantum adiabatic search succeeds with probability $ b  \in [0,1] $ will be denoted by
 \begin{equation}
 \mathcal{P}_{M}(b) = \left\{ \pi \in \mathcal{S}_M \, | \, \left| \langle e_{\pi(j_0)} , \psi_\pi(T) \rangle \right|^2 \geq b \right\} \, . 
 \end{equation}
 Note that the minimum value corresponding to $ U_\pi $ is now found in the $ \pi(j_0) $th entry on the diagonal.
Clearly, the number $ | \mathcal{P}_{M}(b)  | $ of such permutations is less or equal to the total number of permutations $ M! $. Knowing that $ | \mathcal{P}_{M}(b)  | $ consitutes a substantial fraction is enough to deduce a lower bound on the run-time $ T > 0 $ of the quantum adiabatic search. 
\begin{theorem}[cf.~Ref.~\onlinecite{FGGN08}]\label{thm:QAC}
Consider the scrambled quantum-time evolution~\eqref{eq:qdyn2} with generators $  h_\pi(s) =  h_D(s) + c(s) \, U_\pi $ satisfying {\bf a1-2} and common initial state $ \psi_\pi(0) = \phi(0) \in \mathbb{C}^M $.  
Suppose that for some $ \varepsilon , b \in (0,1)$ the set $  \mathcal{P}_{M}(b)  $ contains at least $ \varepsilon M! $ elements. Then:
$$
T \geq \frac{\varepsilon^2 b (M-1)  - 2 \varepsilon\sqrt{2 \varepsilon (M-1)}}{16 \, \sigma_M(u)} \, , \qquad [ =: T_M(b,\varepsilon) ]
$$
where $ \sigma_M(u) := \sqrt{\sum_{k = 1}^M ( u(k) - u(j_0))^2 } $ is assumed to be strictly positive.
\end{theorem}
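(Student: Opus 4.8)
The plan is to follow the idea of Ref.~\onlinecite{FGGN08}: turn an \emph{upper} bound on the $\pi$-dependence of the scrambled final states into a \emph{lower} bound on the run time. If the algorithm succeeds simultaneously for a positive fraction of scramblings, the states $\psi_\pi(T)$ must depend strongly on $\pi$, whereas the generators $h_\pi(s)=h_D(s)+c(s)U_\pi$ depend on $\pi$ only through the diagonal matrix $c(s)U_\pi$, whose cumulative effect over the run is quantitatively small. Throughout one may assume, after subtracting $u(j_0)$ from every diagonal entry (which only adds a multiple of the identity to each $h_\pi$ and so multiplies each $\psi_\pi$ by a global phase), that $u(j_0)=0$, so that $\sigma_M(u)^2=\sum_k u(k)^2$ is the squared Hilbert--Schmidt norm of each $U_\pi$.

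First I would introduce the $\pi$-independent reference $\chi$ solving the driving dynamics alone, $i\dot\chi=h_D(t/T)\chi$ with $\chi(0)=\phi(0)$; being a unitary image of $\phi(0)$ it has $\|\chi(t)\|=1$. Differentiating $\langle\chi(t),\psi_\pi(t)\rangle$ and using hermiticity of $h_D(s)$ gives $\frac{d}{dt}\langle\chi(t),\psi_\pi(t)\rangle=-i\,c(t/T)\,\langle\chi(t),U_\pi\psi_\pi(t)\rangle$. Integrating, taking real parts, and using $\tfrac12\|\psi_\pi(T)-\chi(T)\|^2=1-\Re\langle\chi(T),\psi_\pi(T)\rangle$ together with Cauchy--Schwarz then yields the Duhamel estimate $\tfrac12\|\psi_\pi(T)-\chi(T)\|^2\le\int_0^T c(t/T)\,\|U_\pi\chi(t)\|\,dt$, and, running the same computation against an individual basis vector $e_l$, the sharper statement that $\langle e_l,\psi_\pi(T)\rangle$ differs from $\langle e_l,\chi(T)\rangle$ by at most $\int_0^T c(s/T)\,\|U_\pi\,V_{h_D}(s,T)\,e_l\|\,ds$, with $V_{h_D}$ the propagator generated by $h_D$.

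Next I would average these estimates over $\pi\in\mathcal{S}_M$, which is what turns the operator norm of $U_\pi$ into the much smaller normalised Hilbert--Schmidt norm $\sigma_M(u)/\sqrt M$: for any unit vector $v$, $\frac1{M!}\sum_\pi\|U_\pi v\|^2=\frac1M\sum_k u(k)^2=\sigma_M(u)^2/M$, and averaging only over the $(M-1)!$ scramblings with $\pi(j_0)$ fixed gives $\sigma_M(u)^2/(M-1)$ per coordinate orthogonal to $e_{\pi(j_0)}$ (the $\pi(j_0)$-th entry of $U_\pi$ being $u(j_0)=0$). Feeding this into the amplitude estimate, with one further Cauchy--Schwarz in the time variable, produces an inequality of the shape
\[
\frac1{M!}\sum_{\pi\in\mathcal{S}_M}\Big(\big|\langle e_{\pi(j_0)},\psi_\pi(T)\rangle\big|-\big|\langle e_{\pi(j_0)},\chi(T)\rangle\big|\Big)_+^{2}\;\le\;\frac{\sigma_M(u)^2\,T^2}{M-1}.
\]
On the other hand $|\langle e_{\pi(j_0)},\psi_\pi(T)\rangle|\ge\sqrt b$ for at least $\varepsilon M!$ scramblings, while $\chi(T)$ is a phase times the uniform vector, so $|\langle e_{\pi(j_0)},\chi(T)\rangle|=M^{-1/2}$; hence the left-hand side is at least $\varepsilon(\sqrt b-M^{-1/2})_+^2$. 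Combining the two bounds, solving for $T$, and tracking the $\mathcal{O}(M^{-1/2})$ corrections and the numerical constants lost in the Cauchy--Schwarz steps is what is designed to produce the announced estimate $T\ge[\varepsilon^2 b(M-1)-2\varepsilon\sqrt{2\varepsilon(M-1)}]/(16\,\sigma_M(u))$.

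The step I expect to be the main obstacle is this last one. The clean single-amplitude estimate above, on its own, only delivers $T\gtrsim\sqrt{\varepsilon b(M-1)}/\sigma_M(u)$ — a factor $\sim\sqrt M$ short of the claim, and, since $\sigma_M(u)$ itself grows like $\sqrt M$ in the QREM application, not even exponentially large. Recovering the full linear-in-$M$ numerator requires using the states $\psi_\pi(T)$ jointly over all $\ge\varepsilon M!$ successful scramblings rather than one at a time: comparing the scrambled trajectories pairwise (instead of against one reference) and counting carefully how many pairs lie in $\mathcal{P}_M(b)$ with distinct target coordinates, each coordinate being attained by at most $(M-1)!$ permutations --- this is where the factors $\varepsilon^2$ and $M-1$ and the correction $2\varepsilon\sqrt{2\varepsilon(M-1)}$ come from. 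Making that combinatorial count mesh with the Hilbert--Schmidt averaging is the part of Ref.~\onlinecite{FGGN08} I would work through most carefully.
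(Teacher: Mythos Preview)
The paper does not supply its own proof of this theorem; it simply records the statement and writes ``The proof of this theorem is found Ref.~\onlinecite{FGGN08}.'' So there is no in-paper argument to compare against. What follows are remarks on your sketch in relation to the argument of Farhi--Goldstone--Gutmann--Nagaj.

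Your overall strategy --- a Duhamel comparison against a $\pi$-independent reference, followed by averaging over $\pi$ so that the operator norm of $U_\pi$ is replaced by the normalised Hilbert--Schmidt quantity $\sigma_M(u)/\sqrt{M}$ --- is the right mechanism, and you have correctly diagnosed its shortfall: comparing each $\psi_\pi(T)$ to a single reference $\chi(T)$ gives at best $T\gtrsim\sqrt{\varepsilon b\,M}/\sigma_M(u)$, a factor $\sim\sqrt{M}$ below the claim. One side remark: your assertion that ``$\chi(T)$ is a phase times the uniform vector'' is not justified under assumptions {\bf a1}--{\bf a2} alone, since $h_D(s)$ need not keep $\phi(0)$ as an eigenvector for all $s$; all you actually need, however, is that $\chi(T)$ is a $\pi$-independent unit vector, and that is true.

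The gap in your outline is the one you yourself flag. You correctly identify that the linear-in-$M$ numerator and the factor $\varepsilon^{2}$ must come from exploiting \emph{pairs} of successful permutations with distinct targets $\pi(j_0)\neq\pi'(j_0)$, rather than a single reference. But this is precisely the non-trivial content of Ref.~\onlinecite{FGGN08}, and your sketch stops before carrying it out: one has to organise the pairwise Duhamel bounds $\|\psi_\pi(T)-\psi_{\pi'}(T)\|\le\int_0^T c\,\|(U_\pi-U_{\pi'})\psi_{\pi}\|\,dt$ so that $U_\pi-U_{\pi'}$ is low rank (pairs differing only at the location of the minimum), sum over the admissible pairs, and then balance the Cauchy--Schwarz losses against the count of pairs in $\mathcal{P}_M(b)\times\mathcal{P}_M(b)$ with distinct targets. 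That bookkeeping is what produces the exact numerator $\varepsilon^{2}b(M-1)-2\varepsilon\sqrt{2\varepsilon(M-1)}$ and the constant $16$; without it, the proof is not complete.
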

The proof of this theorem is found Ref.~\onlinecite{FGGN08}. \\

If the energy gaps of $ u $ are of order one, the quantity $ \sigma_M(u) $ will be of order $ \sqrt{M} $. 
The above theorem, then implies that the quantum search algorithm is not faster than order $ \sqrt{M} $ -- the timescale of the Grover algorithm.\cite{Gr1,Gr2} This is a well-known fact which has been discussed early on in various special cases.\cite{DMV01,FGGN08}

\subsection{\label{sec:level2} A  gap estimate in the scambled ensemble}

Fahri, Goldstone, Gosset, Gutmann, and Shor\cite{FGGGS10} now combined the lower bound on the run time with the adiabatic theorem to obtain an upper bound on the gap $ \gamma_\pi(s) $ above the ground-state energy of the family of scambled Hamiltonians $ h_\pi(s) = h_D(s) + c(s) U_\pi $, or more precisely on
\begin{equation}
 \gamma_{\min,\pi}^\#:=  \min_{s \in [0,1] }  \min\left\{ \gamma_\pi(s)^3, \gamma_\pi(s)^2  \right\} \, . 
\end{equation}
Their argument proceed as follows. 
The adiabatic theorem (Theorem~\ref{thm:adabatic}) yields for all $ T > 0 $:
\begin{equation}\label{eq:adabcons}
\sqrt{1 - \left| \langle \psi_\pi(T) , e_{\pi(j_0)} \rangle \right|^2 } \leq \frac{ n_M}{T \gamma_{\min}^\# } \, . 
\end{equation}
where $ n_M := \max_{\pi\in \mathcal{S}_M} \left( 9 \max_{s \in [0,1]} \max\{\left\| h_\pi'(s) \right\|, \left\| h_\pi'(s) \right\|^2 \} + \max_{s \in [0,1]} \left\| h_\pi''(s) \right\| \right)  $. Consider now $ \varepsilon \in (0,1] $ and $ M \geq \max\{ 4, 128/\varepsilon \} $  such that  
\begin{equation}\label{eq:Test}
T_M(\tfrac{1}{2}, \varepsilon) \geq \frac{\varepsilon^2 M }{128 \sigma_M(u) }  > 0 \, . 
\end{equation} 
The adiabatic estimate~\eqref{eq:adabcons} with  $ T =  T_M(\tfrac{1}{2}, \varepsilon)/2 $ implies that for all permutations $ \pi \in \mathcal{S}_M $ for which $  \gamma_{\min,\pi}^\# \geq  \frac{2\sqrt{2}  \, n_M}{T_M(\frac{1}{2}, \varepsilon) } $, the search algorithm succeeds with probability $  \left| \langle \psi_\pi(T) , e_{\pi(j_0)} \rangle \right|^2 \geq \frac{1}{2} $. By Theorem~\ref{thm:QAC} 
this implies that the set of such permutation, 
\begin{equation}
\mathcal{G}_M(\varepsilon) := \left\{ \pi \in \mathcal{S}_M \, | \,  \gamma_{\min,\pi}^\# \geq \frac{2\sqrt{2}  \, n_M}{T_M(\frac{1}{2}, \varepsilon) } \right\} \, , 
\end{equation}
can only make up a fraction of at most $ \varepsilon $ of the total number $ M! $ of permutations. Otherwise one would have a contradiction to  Theorem~\ref{thm:QAC}. This is summarized in the following corollary taken from Ref.~\onlinecite{FGGGS10}.

\begin{corollary}[cf.~Ref.~\onlinecite{FGGGS10}]\label{cor:gap}
Assume that the family of scambled Hamiltonians $ h_\pi(s) = h_D(s) + c(s) U_\pi $ satisfies the assumptions {\bf a1-3} with $ h $ replaced by $ h_\pi $  for all $ \pi \in \mathcal{S}_M $ and all $ s \in [0,1] $. Then for all
$ \varepsilon \in (0,1] $ and $ M \geq \max\{ 4, 128/\varepsilon \} $:
\begin{align}\label{eq:gap}
\left|\mathcal{G}_M(\varepsilon) \right|  \leq \varepsilon M! \, .
\end{align}
\end{corollary}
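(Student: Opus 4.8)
The plan is to make precise the three-step argument already sketched in the paragraph preceding the statement: (i) turn the adiabatic theorem into a single estimate, uniform in $\pi$, of the form \eqref{eq:adabcons}; (ii) verify the elementary lower bound \eqref{eq:Test} on $T_M(\tfrac12,\varepsilon)$ under the hypotheses $M\geq\max\{4,128/\varepsilon\}$; and (iii) apply the contrapositive of Theorem~\ref{thm:QAC}. No step is deep, since all the substantive content is packaged in Theorems~\ref{thm:adabatic} and~\ref{thm:QAC}, which we may invoke directly; the work is purely organisational and the tracking of constants.

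First I would note that assumption \textbf{a3} (applied to each $h_\pi$) gives $\gamma_\pi(s)>0$ for every $s\in[0,1]$, and that $h_\pi(s)=h_D(s)+c(s)U_\pi$ inherits twice continuous differentiability from \textbf{a1}--\textbf{a2}; hence Theorem~\ref{thm:adabatic} applies to every $h_\pi$. In its conclusion I would bound each factor $\gamma_\pi(s)^{-2}$ and $\gamma_\pi(s)^{-3}$ by $1/\gamma_{\min,\pi}^\#$, which is legitimate because $\gamma_{\min,\pi}^\#=\min_s\min\{\gamma_\pi(s)^3,\gamma_\pi(s)^2\}$ forces $\gamma_\pi(s)^3\geq\gamma_{\min,\pi}^\#$ and $\gamma_\pi(s)^2\geq\gamma_{\min,\pi}^\#$ for all $s$. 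Replacing $\|h_\pi'(0)\|,\|h_\pi'(1)\|,\|h_\pi'(s)\|^2,\|h_\pi''(s)\|$ by their suprema over $s$ and using the elementary inequality $2x+7x^2\leq 9\max\{x,x^2\}$ for $x\geq0$, together with the definition of $n_M$ (where the outer $\max_\pi$ makes the bound uniform), yields \eqref{eq:adabcons} for all $\pi\in\mathcal{S}_M$ simultaneously.

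Next I would establish \eqref{eq:Test}: expanding $T_M(\tfrac12,\varepsilon)=\bigl(\tfrac{\varepsilon^2}{2}(M-1)-2\varepsilon\sqrt{2\varepsilon(M-1)}\bigr)/(16\sigma_M(u))$, the bound $M\geq4$ gives $\tfrac{\varepsilon^2}{2}(M-1)-\tfrac{\varepsilon^2 M}{8}\geq\tfrac{\varepsilon^2 M}{4}$, while $M\geq128/\varepsilon$ (after squaring and using $M-1\leq M$) gives $2\varepsilon\sqrt{2\varepsilon(M-1)}\leq\tfrac{\varepsilon^2 M}{4}$; combining these yields $T_M(\tfrac12,\varepsilon)\geq\tfrac{\varepsilon^2 M}{128\,\sigma_M(u)}>0$, the strict positivity using $\sigma_M(u)>0$. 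Then I would fix $T:=T_M(\tfrac12,\varepsilon)/2>0$ in the scrambled evolution \eqref{eq:qdyn2}. For any $\pi\in\mathcal{G}_M(\varepsilon)$ we have $\gamma_{\min,\pi}^\#\geq\tfrac{2\sqrt2\,n_M}{T_M(\frac12,\varepsilon)}=\tfrac{\sqrt2\,n_M}{T}$, so \eqref{eq:adabcons} gives $\sqrt{1-|\langle\psi_\pi(T),e_{\pi(j_0)}\rangle|^2}\leq\tfrac{1}{\sqrt2}$, hence $|\langle\psi_\pi(T),e_{\pi(j_0)}\rangle|^2\geq\tfrac12$; that is, $\mathcal{G}_M(\varepsilon)\subseteq\mathcal{P}_M(\tfrac12)$. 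If $|\mathcal{G}_M(\varepsilon)|>\varepsilon M!$ then $|\mathcal{P}_M(\tfrac12)|\geq\varepsilon M!$, and Theorem~\ref{thm:QAC} with $b=\tfrac12$ forces $T\geq T_M(\tfrac12,\varepsilon)$, contradicting $T=T_M(\tfrac12,\varepsilon)/2<T_M(\tfrac12,\varepsilon)$; hence \eqref{eq:gap} holds.

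The only places calling for care, rather than any genuine obstacle, are the constant bookkeeping in deriving \eqref{eq:adabcons} so that it matches the $9\max\{\cdot,\cdot^2\}$ structure of $n_M$, and the two separate uses of $M\geq4$ and $M\geq128/\varepsilon$ in \eqref{eq:Test}. The single spot where a slip would actually break the argument is the factor of two: the choice $T=T_M(\tfrac12,\varepsilon)/2$ must line up with the threshold $2\sqrt2\,n_M/T_M(\tfrac12,\varepsilon)$ defining $\mathcal{G}_M(\varepsilon)$ so that \eqref{eq:adabcons} delivers probability exactly at least $\tfrac12$ while still leaving $T$ strictly below $T_M(\tfrac12,\varepsilon)$, which is what produces the contradiction with Theorem~\ref{thm:QAC}.
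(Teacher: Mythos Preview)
Your proposal is correct and is precisely the argument the paper sketches in the paragraph preceding the corollary; you have simply made the constant-tracking explicit (the inequality $2x+7x^2\le 9\max\{x,x^2\}$ and the splitting of the numerator of $T_M(\tfrac12,\varepsilon)$ via $M\ge4$ and $M\ge128/\varepsilon$), which the paper leaves to the reader. There is nothing to add.
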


\section{Application: QREM}\label{sec:QREM}

Among the physically relevant examples of unstructured energy landscapes are spin glasses. The simplest (mean-field version) is the random energy model (REM) by Derrida in which one considers the configuration space 
$ \mathcal{Q}_N = \{0, 1\}^N$ of  $ N $ Ising spins.\cite{Der80,Bov06} To each of these $ M = 2^N $ spin configurations, one assigns a random energy
\begin{equation}\label{eq:REM}
u(\sigma) = \sqrt{N} \; g(\sigma)  \, , \qquad  \sigma \in \mathcal{Q}_N  \, ,
\end{equation}
where
$\{ g(\sigma) \}_{\sigma \in Q_N}  $ are independent and identically standard normally distributed random variables. 
The scaling factor in \eqref{eq:REM} ensures that the values of $ u $ are found on in the range 
$ - \frac{N}{\kappa_c} \lesssim u(\sigma)\lesssim  \frac{N}{\kappa_c} $  with 
$$
 \kappa_c = \frac{1}{\sqrt{2 \ln 2}} \, . 
 $$
 More precisely, for $ x > - \ln N / \ln 2 $ let $ v_N(x) \in \mathbb{R} $ be the unique solution of 
 $
 \int_{v_N(x)}^\infty e^{-t^2/2} \frac{dt}{\sqrt{2\pi}} \ = \  2^{-N} e^{-x} $. Then 
 $$
 \sqrt{N} \,  v_N(x) \ = \ \frac{N}{\kappa_c}   + \kappa_c \, x  - \frac{\kappa_c}{2 } \ln \left( 4\pi \ln 2^N \right) + o(1) \, , \qquad N \to \infty \, ,
 $$
and the extremal value statistics of the REM reads:
 \begin{proposition}[cf.~Ref.~\onlinecite{LLR,Bov06}]\label{prop:exval}
 The distribution of the minimum of the REM is asymptotically as $ N \to \infty $ given by 
 \begin{equation}\label{eq:exvaluestat} 
 \mathbb{P}\left( \min u \geq -  \sqrt{N} \, v_N(x) \right) \ = \ \left(1-2^{-N} e^{-x}\right)^{2^N} \rightarrow e^{-e^{-x}} \, . 
 \end{equation}
 Moreover, the process 
 $
 -(v_N)^{-1}(-g(\sigma)) \ = \ \frac{u(\sigma)}{\kappa_c} + \frac{N}{\kappa_c^2} -  \frac{1}{2 } \ln \left( 4\pi \ln 2^N \right) + o(1) $, $ \sigma \in  \mathcal{Q}_N $, 
 converges in distribution to a Poisson process with intensity measure $ e^{\tau} d\tau $. 
 \end{proposition}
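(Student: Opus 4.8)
The plan is to treat this as the classical statement that i.i.d.\ Gaussians lie in the Gumbel domain of attraction, together with Poisson convergence of the rescaled low-lying values, isolating the only genuinely analytic input --- the Gaussian tail (Mills ratio) asymptotics --- and feeding it into two essentially algebraic arguments.

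First I would prove the exact identity in~\eqref{eq:exvaluestat}. Since the $g(\sigma)$, $\sigma \in \mathcal{Q}_N$, are i.i.d.\ and $u(\sigma) = \sqrt N\, g(\sigma)$, independence gives
\be
\mathbb{P}\left( \min_\sigma u(\sigma) \geq - \sqrt N\, v_N(x) \right) = \prod_{\sigma \in \mathcal{Q}_N} \mathbb{P}\left( g(\sigma) \geq - v_N(x)\right) = \mathbb{P}\left( g \geq - v_N(x) \right)^{2^N} ,
\ee
and the symmetry $g \stackrel{d}{=} -g$ together with the defining relation of $v_N(x)$ turns $\mathbb{P}(g \geq - v_N(x)) = 1 - \mathbb{P}(g > v_N(x))$ into $1 - 2^{-N} e^{-x}$, which is legitimate as soon as $N$ is large enough that $x$ lies in the domain $(-\ln N/\ln 2, \infty)$ of $v_N$. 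The limit $(1 - 2^{-N} e^{-x})^{2^N} \to e^{-e^{-x}}$ is then the elementary $(1 - a/n)^n \to e^{-a}$ with $n = 2^N$ and $a = e^{-x}$.

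Next I would establish the asymptotic expansion of $\sqrt N\, v_N(x)$ displayed before the proposition. Writing the defining equation as $\int_{v_N(x)}^\infty e^{-t^2/2}\, dt/\sqrt{2\pi} = 2^{-N} e^{-x}$ and inserting $\int_v^\infty e^{-t^2/2}\,dt/\sqrt{2\pi} = (v\sqrt{2\pi})^{-1} e^{-v^2/2}\bigl(1 + O(v^{-2})\bigr)$, taking logarithms yields $\tfrac12 v_N(x)^2 + \ln\bigl(v_N(x)\sqrt{2\pi}\bigr) = N \ln 2 + x + O(v_N(x)^{-2})$. To leading order $v_N(x) = \sqrt{2N\ln 2}\,(1 + o(1)) = \sqrt N/\kappa_c\,(1+o(1))$; substituting this back into the logarithmic term and solving the resulting quadratic for $v_N(x)$ gives, after multiplication by $\sqrt N$, the claimed $\sqrt N\, v_N(x) = N/\kappa_c + \kappa_c x - \tfrac{\kappa_c}{2}\ln(4\pi \ln 2^N) + o(1)$. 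Carrying enough terms in this iteration for the $o(1)$ to be genuine is the one mildly delicate point, but it is routine; alternatively it can simply be cited from Refs.~\onlinecite{LLR,Bov06}.

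Finally, for the Poisson statement I would consider the point process $\Xi_N := \sum_{\sigma \in \mathcal{Q}_N} \delta_{\tau_N(\sigma)}$ with $\tau_N(\sigma) := -(v_N)^{-1}(-g(\sigma))$, a superposition of $2^N$ i.i.d.\ single-atom processes. Monotonicity of $v_N$ and of its inverse, together with the symmetry of $g$, give for every fixed $\tau$ and all $N$ large enough that $-\tau$ lies in the domain of $v_N$ the \emph{exact} identity
\be
\mathbb{P}\left( \tau_N(\sigma) \leq \tau \right) = \mathbb{P}\left( g(\sigma) \geq v_N(-\tau) \right) = 2^{-N}\, e^{\tau} ,
\ee
whence $2^N\, \mathbb{P}\bigl(\tau_N(\sigma) \in (a,b]\bigr) = e^b - e^a = \int_a^b e^\tau\, d\tau$, so that $2^N$ times the common law of $\tau_N(\sigma)$ converges vaguely to $e^\tau\, d\tau$, while $\max_\sigma \mathbb{P}(\tau_N(\sigma) \in K) \leq 2^{-N} e^{\sup K} \to 0$ for every compact $K$ gives the uniform asymptotic negligibility of the array. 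The classical Poisson limit theorem for null arrays of point processes (e.g.\ Kallenberg) then yields convergence of $\Xi_N$ in distribution to the Poisson process with intensity $e^\tau\, d\tau$, and the identity $-(v_N)^{-1}(-g(\sigma)) = u(\sigma)/\kappa_c + N/\kappa_c^2 - \tfrac12 \ln(4\pi \ln 2^N) + o(1)$, obtained by inverting the expansion of the previous step, rewrites this process in the stated form. The main obstacle is thus purely the bookkeeping in the Mills-ratio expansion; the probabilistic content reduces, strikingly, to an exact finite-$N$ computation.
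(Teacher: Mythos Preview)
The paper does not actually prove Proposition~\ref{prop:exval}: it is stated with the attribution ``cf.\ Ref.~\onlinecite{LLR,Bov06}'' and no argument is given in the text. Your proposal is therefore not competing against a proof in the paper but supplying one where the authors simply cite the literature.

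That said, your sketch is the standard classical argument and is correct. The first part is indeed an exact identity following from independence, Gaussian symmetry, and the very definition of $v_N$; the limit is elementary. The Mills-ratio bootstrap for the expansion of $\sqrt N\, v_N(x)$ is the textbook route (and is precisely what one finds in Refs.~\onlinecite{LLR,Bov06}). For the Poisson convergence, your observation that $\mathbb{P}(\tau_N(\sigma)\le\tau)=2^{-N}e^{\tau}$ holds \emph{exactly} for all large $N$ is correct and makes the verification of the hypotheses of the Poisson limit theorem for null arrays immediate; this is again the argument in the cited references, perhaps stated a bit more cleanly than usual. In short: there is nothing to compare, and nothing to correct.
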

Since the extremal small values of the REM converge to a Poisson process,  the ground-state of the REM is typically separated by order one from the first excited state. \\

One may render $ \mathcal{Q}_N $ a graph by declaring vertices $ \sigma, \sigma' \in \mathcal{Q}_N  $ as neighbours, i.e.\ $ \sigma' \sim \sigma $, if  they differ by one spin flip. The graph Laplacian on this so-called Hamming-cube is then given by 
$$
\left(\Delta \psi\right)(\sigma) = \sum_{\sigma' \sim \sigma}  \psi( \sigma' ) - N \psi( \sigma) \, , \quad \psi \in \ell^2( \mathcal{Q}_N) \cong \mathbb{C}^{2^N} \, . 
$$
By identifying the canonical basis in $ \mathbb{C}^{2^N} $ with the joint eigenbasis of the third-components $ \sigma_j^z $, $ j= 1, \dots , N $, of the spin-operators of $ N $ spin-$1/2 $ particles, the Laplacian may be interpreted as a transversal constant magnetic field on those spins, $ -\Delta = N - \sum_{j=1}^N \sigma_j^x $. Adding the REM energies in form of a diagonal matrix $ U $ gives rise to the quantum random energy model (QREM):
\begin{equation}
 \mathcal{H}(\kappa) = - \Delta + \kappa\,  U     \, , \quad \kappa > 0 \, .
\end{equation}
Among the interesting properties of this model is a first-order phase transition of the 
ground-state  of $  \mathcal{H}(\kappa) $ at $ \kappa = \kappa_c $. Numerical findings of J\"org, Krzakala, Kurchan and Maggs\cite{JKKM08} suggest that:
\begin{description}
\item[Case $ \kappa < \kappa_c $] the ground-state is delocalised  with energy $E_0(\kappa) = -\kappa^2  + o(1) $ whose fluctuations are suppressed exponentially in $ N $.
\item[Case $ \kappa > \kappa_c $]  the ground-state is localised approximately in the eigenvector corresponding to the unique minimum of $ u $
with energy $ E_0(\kappa)  = N + \kappa \, \min u  +  \mathcal{O}(1) = N ( 1- \frac{\kappa}{\kappa_c} ) +   \mathcal{O}(\ln N) $,
\item[Case $ \kappa = \kappa_c $] The energy gap $ \Gamma(\kappa) = E_1(\kappa) - E_0(\kappa)  $  above the unique ground-state 
	closes exponentially in $ N $.
\end{description} 
In this context, it is useful to recall that the spectrum of the Laplacian $ \mathcal{H}(0) $ can be easily computed (as a sum of $ N $ commuting operators). It coincides with the even integers $ \{ 0, 2, \dots , 2N \} $ and the unique ground-state is the maximally delocalised state $ \phi(0) =  \frac{1}{\sqrt{2^N}}  (1, \dots , 1 )^T \in \mathbb{C}^{2^N}$. \\

The full justification of the above sketched low-energy properties of the QREM will be the topic of another paper.\cite{sw} 
Our main aim here is to point out that the conjectured vanishing of the gap $ \Gamma(\kappa) $ at some $ \kappa > 0 $ is a straightforward corollary of the general considerations in the first section.
\begin{theorem}\label{thm:main}
There is a numerical constant $ C < \infty $ such that the energy gap above the unique ground-state of the QREM satisfies:
\begin{equation}\label{eq:bound}
\lim_{N\to \infty} \mathbb{P}\left( \min_{\kappa \in (0,N^3)} (E_1(\kappa) - E_0(\kappa) ) \leq C \,  N^{5} \, 2^{-\frac{N}{6}} \right) = 1 \, . 
\end{equation}
\end{theorem}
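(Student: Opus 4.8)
The plan is to apply Corollary~\ref{cor:gap} to the QREM along the linear schedule $c(s)=s$, $h_D(s)=(1-s)(-\Delta)$, and to exploit that scrambling by a permutation leaves the law of the model unchanged (the REM energies being i.i.d.); a short averaging argument then converts the corollary's statement ``$\gamma_{\min,\pi}^\#$ is large only for a small fraction of $\pi$'' into ``$\gamma_{\min}^\#$ is small for the un-scrambled QREM with probability close to one''. With this schedule, $h(s)=(1-s)\big(-\Delta+\kappa(s)\,U\big)$ for $s\in(0,1)$ with $\kappa(s)=s/(1-s)$ sweeping $(0,\infty)$, so the gap of $h(s)$ is $\gamma(s)=(1-s)\big(E_1(\kappa(s))-E_0(\kappa(s))\big)$. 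Assumptions {\bf a1}--{\bf a2} are immediate; {\bf a3} holds almost surely, since in the canonical basis $-(-\Delta+\kappa\,U_\pi)$ has non-negative off-diagonal entries and is irreducible on the connected Hamming cube, so Perron--Frobenius gives a non-degenerate ground state of $-\Delta+\kappa\,U_\pi$ for every $\kappa\ge 0$ and every $\pi$, while $h_\pi(1)=U_\pi$ is non-degenerate because the REM has an a.s.\ unique minimum. Thus Corollary~\ref{cor:gap} applies to $h_\pi(s)=(1-s)(-\Delta)+s\,U_\pi$.

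Next I would control the parameters. Since $h_\pi'(s)=\Delta+U_\pi$ and $h_\pi''\equiv 0$, the Gaussian bound $\max_\sigma|g(\sigma)|\le 2\sqrt N$ --- which holds with probability tending to one and uniformly in $\pi$, together with $\|\Delta\|=2N$ --- gives $n_M\le 144\,N^2$; likewise $\sigma_M(u)^2=N\sum_\sigma(g(\sigma)-g(j_0))^2\le 16\,N^2\,2^N$, so $\sigma_M(u)\le 4N\,2^{N/2}$ and $T_M(\tfrac12,\varepsilon)\ge \varepsilon^2\,2^{N/2}/(512\,N)$. Hence on the high-probability event $B$ where these bounds hold, the threshold in Corollary~\ref{cor:gap} is at most the \emph{deterministic} quantity $\delta:=c_0\,N^3/(\varepsilon^2\,2^{N/2})$ for a numerical constant $c_0$. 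On $B$ one has $\{\pi:\gamma_{\min,\pi}^\#\ge\delta\}\subseteq\mathcal{G}_M(\varepsilon)$, whence $\sum_\pi \mathbf 1\{\gamma_{\min,\pi}^\#\ge\delta\}\le \varepsilon\,M!+\mathbf 1_{B^c}\,M!$; taking expectations and using that $(U_\pi)$ has the same law as $U$ (so that $\mathbb{P}(\gamma_{\min,\pi}^\#\ge\delta)$ does not depend on $\pi$) yields $\mathbb{P}(\gamma_{\min}^\#\ge\delta)\le \varepsilon+\mathbb{P}(B^c)$. For $N$ large $\delta<1$, and since $x\mapsto\min\{x^3,x^2\}$ is increasing, $\gamma_{\min}^\#<\delta$ forces $\min_{s\in[0,1]}\gamma(s)<\delta^{1/3}$.

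The one delicate point is that this minimizing $s$ could a priori lie in the corner $s\to 1$, i.e.\ at $\kappa(s)\to\infty$, which is outside the window $(0,N^3)$. I would exclude this with the elementary bound $E_1(\kappa)-E_0(\kappa)\ge \kappa\big(u(j_1)-u(j_0)\big)-2N$ (from $-\Delta+\kappa\,U\ge\kappa\,U$ and $\|\Delta\|=2N$, where $u(j_1)$ is the second-smallest REM value), which for $s\ge s^*:=N^3/(N^3+1)$, i.e.\ $\kappa(s)\ge N^3$, gives $\gamma(s)\ge s\big(u(j_1)-u(j_0)\big)-2N(1-s)\ge \tfrac12\big(u(j_1)-u(j_0)\big)-2N^{-2}$. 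Since $u(j_1)-u(j_0)\ge 16\,N^{-2}$ with probability tending to one (a routine extreme-value estimate for the REM underlying Proposition~\ref{prop:exval}, e.g.\ $\mathbb{P}(u(j_1)-u(j_0)\le t)=O(t/\sqrt N)$), this is $\ge 6\,N^{-2}>\delta^{1/3}$ on $[s^*,1]$ for large $N$. Hence, on an event of probability $1-\varepsilon-\mathbb{P}(B^c)-o(1)$, the minimizer $s_0$ lies in $(0,s^*)$, so $\kappa_0:=\kappa(s_0)\in(0,N^3)$ and $E_1(\kappa_0)-E_0(\kappa_0)=\gamma(s_0)/(1-s_0)<2N^3\,\delta^{1/3}$. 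Choosing $\varepsilon=\varepsilon_N=N^{-3/2}$ (so $\varepsilon_N\to 0$ while $2^N\ge 128/\varepsilon_N$ eventually) sends the probability to $1$ and turns the bound into $\min_{\kappa\in(0,N^3)}\big(E_1(\kappa)-E_0(\kappa)\big)<C\,N^5\,2^{-N/6}$, which is~\eqref{eq:bound}.

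I expect the main obstacle to be this last step: making sure the exponentially small gap detected by the adiabatic estimate really occurs at some $\kappa<N^3$ rather than in the degenerate $\kappa\to\infty$ end of the interpolation, which is precisely what forces the (mild) input on the REM spacing $u(j_1)-u(j_0)$ and explains why the window in \eqref{eq:bound} is $(0,N^3)$ and not $(0,\infty)$. The de-scrambling in the second step is the conceptual heart, but it is a one-liner once the exchangeability is noticed; the estimates on $n_M$ and $\sigma_M(u)$ are routine Gaussian bookkeeping, and checking {\bf a3} is a standard Perron--Frobenius argument.
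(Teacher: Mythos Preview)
Your proposal is correct and follows essentially the same route as the paper: the linear schedule $h_\pi(s)=-(1-s)\Delta+sU_\pi$, Perron--Frobenius for {\bf a3}, Gaussian tail bounds for $n_M$ and $\sigma_M(u)$, Corollary~\ref{cor:gap}, de-scrambling via the permutation invariance of the REM law, and a variational lower bound on $E_1(\kappa)-E_0(\kappa)$ together with the REM spacing $u_1-u_0$ to confine the minimizer to $\kappa<N^3$. The only cosmetic differences are your choice $\varepsilon=N^{-3/2}$ (the paper takes $\varepsilon=N^{-1}$) and the spacing input $u_1-u_0\gtrsim N^{-2}$ (the paper uses $N^{-1}$), neither of which changes the argument.
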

Before giving the short proof let us add a few comments:
\begin{enumerate}
\item The arguments only yield the existence of some value of the coupling $ \kappa \in ( 0, N^3) $ at which the gap closes exponentially and do not determine the conjectured value $ \kappa = \kappa_c $. In particular, the value of the critical coupling could still be dependent on  $ N $ and the realisation of the REM.  This can only be excluded in a more detailed analysis.\cite{sw}

Nevertheless, the established bound~\eqref{eq:bound} already distinguishes the low-energy properties of the QREM from those of the REM, where the energy gap above the ground state is typically order one, cf.~Proposition~\ref{prop:exval}. 

As will be seen in the subsequent proof, the upper bound $ \kappa < N^3 $ on the interval in which the phase transition can occur is far from being optimized. In order to keep the paper simple, we however refrain from optimising this value. 

\item The fact that first-order phase transitions of the ground-state are the stumbling block to speeding up polynomially the search in various problems in spin-glass theory is well-known - the REM landscape is just one example. Other interesting examples are random optimisation problems from the SAT class (for instances having a unique satisfying assignment), see Refs.~\onlinecite{PO1,Alt09,AC09,J+10} and the recent review Ref.~\onlinecite{Bapst13} and references therein. 
\end{enumerate}
\begin{proof}[Proof of Theorem~\ref{thm:main}]
We aim to apply Corollary~\ref{cor:gap} with $ M = 2^N $ and
$$
h_\pi(s) = - (1-s) \Delta + s\,  U_\pi \, , \quad s \in [0,1] \, .
$$
To do so, we note that Assumption~{\bf a1} as well as {\bf a2} are evidently satisfied. It remains to check {\bf a3}. Since $ h_\pi(s) $ generates for each $ s \in [0,1) $ and $ \pi $ a positivity improving semigroup, the ground-state of $ h_\pi(s) $ is unique by the Perron-Frobenius theorem. In case $ h_\pi(1) = U_\pi $ the almost-sure uniqueness of the ground-state follows from the almost-sure non-degeneracy of the $ 2^N $ Gaussian random variables. Moreover, we may estimate
\begin{align*}
& \sigma_{M}(u) \leq \sqrt{M} \, 2 \| u \|_\infty \, , \\
&  \left\| h_\pi'(s) \right\| \leq \| \Delta \| + \| U \| \leq 2 N +  \| u \|_\infty \, ,
\end{align*}
and $ h''(s) = 0 $. For all realisations of the REM aside from a fraction whose probability vanishes exponentially as $ N\to \infty $, we also have
$
\| u \|_\infty = \max_\sigma |u(\sigma) | \leq 2N/ \kappa_c
$. 
This follows from the extremal value statistics~\eqref{eq:exvaluestat} of the REM. Thus $ n_M \leq C N^2  $ with some numerical constant $ C < \infty $ and we may conclude from~\eqref{eq:gap} with $ \varepsilon = N^{-1} $ that for all $ N $ large enough 
$$
\frac{1}{M!} \sum_{\pi \in \mathcal{S}_M} 1\left[\min_{s \in [0,1] }  \min\left\{ \gamma_\pi(s)^3, \gamma_\pi(s)^2  \right\}  \leq  \,   C \frac{N^{9/2}}{\sqrt{M}} \right] \geq 1 - N^{-1} \, ,
$$
where $ C < \infty $ is again a numerical constant. The right side in the indicator function $ 1[ \dots ] $ is smaller than one for $ N  $ large enough, such that that 
$
\frac{1}{M!} \sum_{\pi \in \mathcal{S}_M}  1\left[\min_{s \in [0,1] } \gamma_\pi(s) \leq C N^{3/2} / M^{1/6} \right] \geq 1 - N^{-1} $. 
We now use the permutation invariance of the distribution  of the REM to conclude for all $ N $ large enough
\begin{equation}
\mathbb{P}\left( \min_{s \in [0,1] } \gamma(s) \leq C N^{3/2} / M^{1/6} \right) = \frac{1}{M!} \sum_{\pi \in \mathcal{S}_M}  \mathbb{P}\left( \min_{s \in [0,1] } \gamma_\pi(s) \leq C \frac{N^{3/2}}{M^{1/6}} \right) \geq 1 - 2 N^{-1} \, . 
\end{equation}
(The factor of two accounts for disregarding all realisations for which $ \| u \|_\infty > 2 N/\kappa_c $ which occur even with exponentially small probability.)
In order to relate the QREM to $ h(s) $, we write
$$
\mathcal{H}(\kappa) = (1+\kappa) \, h\left(\frac{\kappa}{1+\kappa}\right) \, 
$$
such that $ \Gamma(\kappa) = (1+\kappa) \, \gamma\left(\frac{\kappa}{1+\kappa}\right) $ and hence $ \lim_{N\to \infty} \mathbb{P}\left( \min_{\kappa \geq  0} \Gamma(\kappa) /(1+\kappa) \leq C \,  N^{3/2} \, 2^{-\frac{N}{6}} \right) = 1 $.

The fact that the minimum value of the ratio $ \Gamma(\gamma)/ (1+\kappa) $ is attained at $ \kappa > 0 $ is elementary. That it is attained with asymptotically full probability at some $ \kappa < N^3 $ is seen using the variational principle. The latter yields the following elementary bounds $ E_1(\kappa) \geq \kappa u_1 $ and $ E_0(\kappa) \leq N + \kappa u_0 $, where $ \min u =:u_0 < u_1 $ denote the minimum and second smallest value of the REM energies. Since the difference $ u_1-u_0 $ is bounded from below by $ N^{-1} $ with asymptotically full probability (cf.~Proposition~\ref{prop:exval}), the resulting lower bound $ \Gamma(\kappa)/ (1+\kappa) \geq \kappa (u_1-u_0) / (1+\kappa)- N / (1+\kappa)$ is bounded from below by a positive constant times $  N^{-1} $ for all $ \kappa \geq N^3  $ with asymptotically full probability. This completes the proof.
\end{proof}

\begin{acknowledgments}
This work was accomplished as a summer project of J.A. at the Zentrum
Mathematik at TUM financed by MMUF, the Mellon Mays Undergraduate
Fellowship.  J.A. would like to thank TUM for their hospitality; in
particular, Frau B\"acker for her secretarial support in organizing  a
work environment.
S.W. is supported by the DFG (WA 1699/2-1).
\end{acknowledgments}


\begin{thebibliography}{99}
\bibitem{AD+04} D. Aharonov, W. van Dam, J. Kempe, Z. Landau, S. Lloyd, O. Regev. Adiabatic Quantum computation is equivalent to standard quantum computation, Proceedings of the 45th Annual IEEE Symposium on Foundations of Computer Science, 42Ð51, (2004).
\bibitem{Alt09} B. Altshuler, H. Krovi, J. Roland. Anderson localization makes adiabatic
quantum optimization fail.  PNAS 107, 12446-12450 (2010).
\bibitem{AC09} M. H. S. Amin,  V. Choi. First Order Quantum Phase Transition in Adiabatic Quantum Computation. Phys. Rev. A 80, 062326 (2009).
\bibitem{Bapst13} V. Bapst, L. Foini, F. Krzakala, G. Semerjian, F. Zamponi. The Quantum Adiabatic Algorithm applied to random optimization problems: the quantum spin glass perspective. Physics Reports 523, 127-105 (2013).

\bibitem{Bov06} A. Bovier. Statistical Mechanics of Disordered Systems: A Mathematical Perspective.
Cambridge University Press, 2006.
\bibitem{DMV01} W. van Dam, M. Mosca, and U. Vazirani, ÒHow powerful is adiabatic quantum computation?Ó, in 42nd IEEE Symposium on Foundations of Computer Science, (IEEE Computer Soc., Los Alamitos,
CA, 2001), pages 279-287.
\bibitem{Der80} B. Derrida, Phys. Rev. Lett. 45, 79-82 (1980).
\bibitem{FGGS00} E. Farhi, J. Goldstone, S. Gutmann,  M. Sipser. Quantum Computation by Adiabatic Evolution. arXiv:quant-ph/0001106 (2000).
\bibitem{F+01} E. Farhi, J. Goldstone, S. Gutmann, J. Lapan, A. Lundgren, D. Preda. A quantum adiabatic evolution algorithm
applied to random instances of an NP-complete problem, Science, 292, 472Ð476 (2001).
%
\bibitem{FGGN08} E. Farhi, J. Goldstone, S. Gutmann, and D. Nagaj . How To Make the Quantum Adiabatic Algorithm Fail. 
Int.\ J.\ Quant.\ Inf. 6, 503-516 (2008).
%
\bibitem{FGGGS10}  E. Fahri, J. Goldstone, D. Gosset, S. Gutmann, and P. Shor. Unstructured Randomness,
Small Gaps and Localization. J. Quant. Inf. Comp. 11, 840-854 (2011).
\bibitem{Gr1} L. Grover. A fast quantum-mechanical search algorithm for database search. Proceedings of the twenty-eighth annual ACM,
symposium on Theory of computing, 212Ð219, (1996).
\bibitem{Gr2} L. Grover. How fast can a quantum computer search? arXiv:quant-ph/9901021 (1999). 
\bibitem{JRS07} S. Jansen, M. B. Ruskai, R. Seiler, Bounds for the adiabatic approximation with applications to quantum computation. J.\ Math.\ Phys.\ 48, 102111 (2007).
\bibitem{JKKM08} T. J\"org, F. Krzakala, J. Kurchan, A. C. Maggs, Simple Glass Models and Their Quantum Annealing. Phys. Rev. Lett. 101, 147204 (2008).
\bibitem{J+10} T. J\"org, F. Krzakala, G. Semerjian, F. Zamponi. First-Order Transitions and the Performance of Quantum Algorithms in Random Optimization Problems. Phys. Rev. Lett. 104, 207206 (2010). 
\bibitem{Kato} T. Kato, On the adiabatic theorem of quantum mechanics. J. Phys. Soc. Japan 5, 435-439 (1950).
\bibitem{LLR} M. R. Leadbetter, G. Lindgren, H. Rootz\'en. Extremes and related properties of random sequences and processes. Springer, 1983.
\bibitem{PO1} M. Ostilli, C. Presilla. Exact ground state for a class of matrix Hamiltonian models: quantum phase transition and universality in the thermodynamic limit. 	J. Stat. Mech. P11012 (2006).
\bibitem{sw} S. Warzel. Low-energy properties and the ground-state phase transition in the QREM. In preparation. 
\end{thebibliography}
\end{document}